\newtheorem{theorem}{Theorem}
\newcolumntype{Y}{>{\centering\arraybackslash}X}
\newcommand{\fixme}[2]{\ifx&#2&{\leavevmode\color{red}#1}\else{\leavevmode\color{red}FIXME\{}#1{\leavevmode\color{red}\}}\footnote{{\leavevmode\color{red}#2}}\PackageWarning{Fixme}{#1: #2}\fi}
\begin{document}

\title{On the Decoding of Polar Codes on Permuted Factor Graphs}

\author{\IEEEauthorblockN{Nghia Doan\IEEEauthorrefmark{1}, Seyyed Ali Hashemi\IEEEauthorrefmark{1}, Marco Mondelli\IEEEauthorrefmark{2}, Warren J. Gross\IEEEauthorrefmark{1}}
\IEEEauthorblockA{
\IEEEauthorrefmark{1}Department of Electrical and Computer Engineering, McGill University, Montr\'eal, Qu\'ebec, Canada\\
\IEEEauthorrefmark{2}Department of Electrical Engineering, Stanford University, California, USA\\
Email: nghia.doan@mail.mcgill.ca, seyyed.hashemi@mail.mcgill.ca, mondelli@stanford.edu, warren.gross@mcgill.ca}}

\maketitle
\begin{abstract}
Polar codes are a channel coding scheme for the next generation of wireless communications standard (5G). The belief propagation (BP) decoder allows for parallel decoding of polar codes, making it suitable for high throughput applications. However, the error-correction performance of polar codes under BP decoding is far from the requirements of 5G. It has been shown that the error-correction performance of BP can be improved if the decoding is performed on multiple permuted factor graphs of polar codes. However, a different BP decoding scheduling is required for each factor graph permutation which results in the design of a different decoder for each permutation. Moreover, the selection of the different factor graph permutations is at random, which prevents the decoder to achieve a desirable error-correction performance with a small number of permutations. In this paper, we first show that the permutations on the factor graph can be mapped into suitable permutations on the codeword positions. As a result, we can make use of a single decoder for all the permutations. In addition, we introduce a method to construct a set of predetermined permutations which can provide the correct codeword if the decoding fails on the original permutation. We show that for the 5G polar code of length $1024$, the error-correction performance of the proposed decoder is more than $0.25$~dB better than that of the BP decoder with the same number of random permutations at the frame error rate of $10^{-4}$.
\end{abstract}

\begin{IEEEkeywords}
polar codes, belief propagation decoding, permuted factor graph.
\end{IEEEkeywords}

\IEEEpeerreviewmaketitle
\section{Introduction} \label{sec:intro}
Polar codes \cite{arikan} are the first class of error-correcting codes that was proved to achieve channel capacity with efficient encoding and decoding algorithms. As such, they have been adopted for the enhanced mobile broadband (eMBB) control channel of the fifth generation (5G) wireless communications standard. Successive cancellation (SC) and belief propagation (BP) decoding algorithms are two methods introduced in \cite{arikan} to decode polar codes. Although SC decoding can provide a low-complexity solution, its serial nature prevents the decoding to reach high throughput. Furthermore, the polarization phenomenon of polar codes under SC decoding requires large block lengths. Thus, for short to moderate code lengths, the error-correction performance of SC decoding is insufficient for many practical applications. To improve the performance of SC decoding, SC list (SCL) decoding was proposed in \cite{tal_list}. SCL decoding maintains a list of most likely codewords by running a list of coupled SC decoders in parallel. An enormous improvement in the error probability of SCL decoder is obtained if a cyclic redundancy check (CRC) is used to select the correct codeword \cite{tal_list}. However, SCL decoding still suffers from the serial nature and low throughput \cite{Alexios_LLR_SCLD}, which scales with list size. Several attempts have been made to reduce the computational complexity and increase the throughput of polar code SCL decoders \cite{PSCL_GLOBECOM,Ali_mem_effic_PC,Ali_FSSCL}.

Unlike SC and SCL decoding, the message passing process of BP decoding can be executed in parallel, which allows this decoder to obtain high throughput. However, BP decoding cannot achieve low error probability with a limited number of iterations. One of the methods used to improve the error-correction performance of BP decoding for polar codes is to utilize the redundant factor graph representations \cite{Kor09thesis, hussami2009performance}. It was shown in \cite{Kor09thesis, hussami2009performance} that for a polar code of length $N$, there are $(\log_{2}N)!$ redundant representations which can be constructed by different permutations of the layers in the factor graph of polar codes. \cite{Kor09thesis, hussami2009performance} also suggested to use only the $\log_2 N$ cyclic permutations in the factor graph layers, in order to limit the large number of factor graph permutations. In \cite{elkelesh2018belief}, it was shown that if the permutations in the factor graph of polar codes are selected randomly and a CRC is concatenated to the polar code, the error probability of decoding a polar code on the permuted factor graph outperforms the performance of a non CRC-aided SCL decoder. However, the number of randomly selected permutations and therefore the number of parallel BP decoders required to achieve a reasonable error probability in \cite{elkelesh2018belief} is too high. Moreover, the BP decoder scheduling has to be changed for each permutation of the factor graph layers. Therefore, a different decoder has to be designed for each permutation in practical applications.

In this paper, we first show that the permutations on the polar code factor graph can be mapped into permutations on the codeword positions, therefore allowing the use of the same decoder structure for all the permutations. Second, based on the observation that the decoding on the original permutation only fails with a small error probability, we propose a novel empirical approach to construct a set of good permutations. Our experimental results show that for the polar code selected for the eMBB control channel of 5G, with length $1024$, rate $0.5$, and by using a $24$-bit CRC, the error-correction performance is more than $0.25$ dB better than that of the BP decoder when the same number of random permutations are employed, at the frame error rate (FER) of $10^{-4}$.

The remainder of this paper is organized as follows. Section~\ref{sec:polar} briefly introduces a background on polar codes and BP decoding. Section~\ref{sec:pcbp} and Section~\ref{sec:results} present the proposed decoder and the experimental results, respectively. Finally, concluding remarks are provided in Section~\ref{sec:conclude}.

\section{Preliminaries} \label{sec:polar}

\subsection{Polar Codes}
\label{sec:polar:polar}

A polar code $\mathcal{P}(N,K)$ of length $N$ with $K$ information bits is constructed by applying a liner transformation to the message word $\bm{u} = \{u_0,u_1,\ldots,u_{N-1}\}$ as $\bm{x} = \bm{u}\bm{G}^{\otimes n}$ where $\bm{x} = \{x_0,x_1,\ldots,x_{N-1}\}$ is the codeword, $\bm{G}^{\otimes n}$ is the $n$-th Kronecker power of the polarizing matrix $\bm{G}=\bigl[\begin{smallmatrix} 1&0\\ 1&1 \end{smallmatrix} \bigr]$, and $n = \log_2 N$. The vector $\bm{u}$ contains a set $\mathcal{A}$ of $K$ information bits and a set $\mathcal{F}$ of $N-K$ frozen bits. The positions and the value of the frozen bits are known to the encoder and the decoder. The codeword $\bm{x}$ is then modulated and sent through the channel. In this paper, binary phase-shift keying (BPSK) modulation and additive white Gaussian noise (AWGN) channel model are considered, therefore the soft vector of the transmitted codeword received by the decoder is
\begin{equation}
\bm{y}=(\mathbf{1}-2\bm{x})+\bm{z}\text{,}
\end{equation}
where $\mathbf{1}$ is an all-one vector with size $N$, and $\bm{z} \in \mathbbm{R}^N$ is the AWGN noise vector with variance $\sigma^2$ and zero mean. It is noteworthy that all the decoders presented in this paper work in the log-likelihood ratio (LLR) domain. The corresponding LLR vector of the transmitted codeword is 
\begin{equation}
\text{LLR}(\bm{x})=\ln{\frac{\text{Pr}(\bm{x}=0|\bm{y})}{\text{Pr}(\bm{x}=1|\bm{y})}}=\frac{2\bm{y}}{\sigma^2}\text{.}
\end{equation}

\subsection{Belief Propagation Decoding}
\label{sec:polar:BPD}

\begin{figure}[t]
  \centering
  \label{figBPDec}	
  \begin{subfigure}[b]{0.5\textwidth}
  	\centering
  	\includegraphics[width=0.7\linewidth]{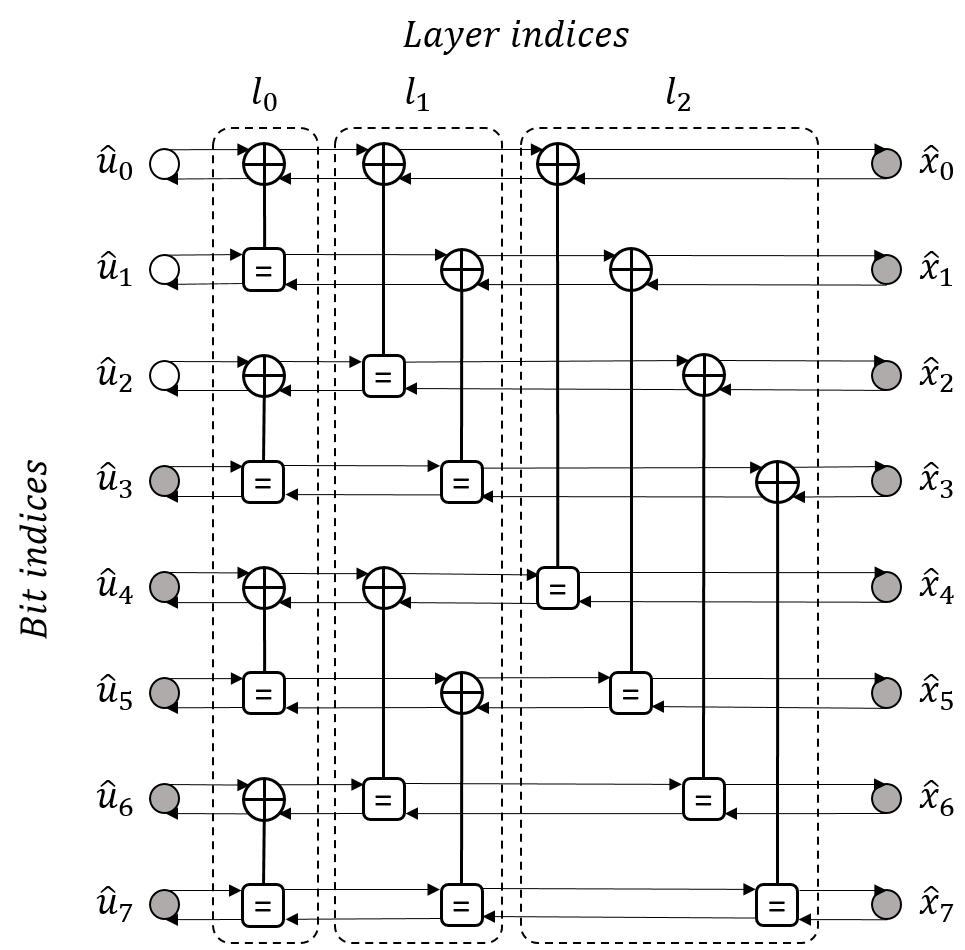}
  	\caption{}
  	\label{figBPDec:a}
  \end{subfigure}  
  \begin{subfigure}[b]{0.5\textwidth}
  	\centering
  	\includegraphics[width=0.4\linewidth]{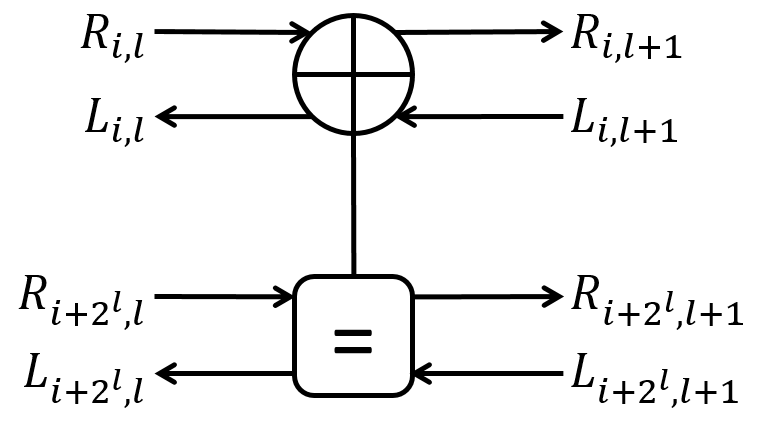}
  	\caption{}
  	\label{figBPDec:b}
  \end{subfigure}
  \caption{(a) BP decoding on the factor graph of $\mathcal{P}(8,5)$ with $\{u_0,u_1,u_2\}\in \mathcal{F}$, (b) a processing element (PE).}
\end{figure}

BP decoding is an iterative message passing algorithm applied on the factor graph representation of a code. The received channel LLR values, $\text{LLR}({\bm{x}})$, are iteratively propagated through the graph until either the LLR values converge or the maximum number of iterations is reached. The decoder then makes a hard decision based on the resulting LLR values to maximize the \textit{a posteriori} probability $\text{Pr}(\bm{x}|\bm{y})$.

Fig.~\ref{figBPDec:a} illustrates BP decoding on a factor graph representation of $\mathcal{P}(8,5)$. The messages are propagated through the processing elements (PEs) \cite{arikan2010polar} located in each layer. An update iteration starts with a right-to-left message pass that propagates the LLR values from the channel (rightmost) layer, to the information bit (leftmost) layer, and ends with the left-to-right message pass which occurs in the reverse order. Fig.~\ref{figBPDec:b} shows a PE with its corresponding messages, where $R_{i,l}$ denotes a left-to-right message, and $L_{i,l}$ denotes a right-to-left message of the $i$-th bit index at layer $l$. The update rule \cite{arikan2010polar} of the messages in each PE is given as
\begin{align}
\begin{split}
L_{i,l} &= f(L_{i,l+1},R_{i+2^l,l}+L_{i+2^l,l+1}) \text{,}\\  
L_{i+2^l,l} &= f(L_{i,l+1},R_{i,l}) + L_{i+2^l,l+1} \text{,}\\ 
R_{i,l+1} &= f(R_{i,l},L_{i+2^l,l+1}+R_{i+2^l,l})  \text{,}\\
R_{i+2^l,l+1} &= f(R_{i,l},L_{i,l+1}) + R_{i+2^l,l} \text{,}
\end{split}
\label{PEUpdate}
\end{align}
where $f(x,y)=\ln{\frac{1+xy}{x+y}}$ for any $x,y \in \mathbbm{R}$.

Initially, $L_{i,n-1}$ ($0 \leq i < N$) are set to the received channel LLR values, $R_{i,0}$ ($i \in \mathcal{F}$) are set to $+\infty$, and all the other left-to-right and right-to-left messages of the PEs are set to $0$. The BP decoding performs a predetermined $I_{\max}$ update iterations where the messages are propagated through all PEs in accordance with (\ref{PEUpdate}). The decoder then makes a hard decision on the LLR values of the $i$-th bit at the information bit layer to obtain the estimated message word as
\begin{equation}
\hat{u}_i=
  \begin{cases}
    0 \text{,} & \text{if } R_{i,0}+L_{i,0} \geq 0 \text{,}\\
    1 \text{,} & \text{otherwise.}
  \end{cases} 
\end{equation}

In order to reduce the decoding latency, $G$-matrix-based \cite{yuan2014early} or CRC-based \cite{elkelesh2018belief} early termination conditions are checked after each update iteration. If the early termination condition is satisfied, the decoding is terminated with the assumption that the LLR values have converged to indicate the correct codeword.

\subsection{BP Decoding on Permuted Factor Graphs}
\label{sec:polar:PGBPD}

\begin{figure*}[t]
\centering
\includegraphics[width=0.65\linewidth]{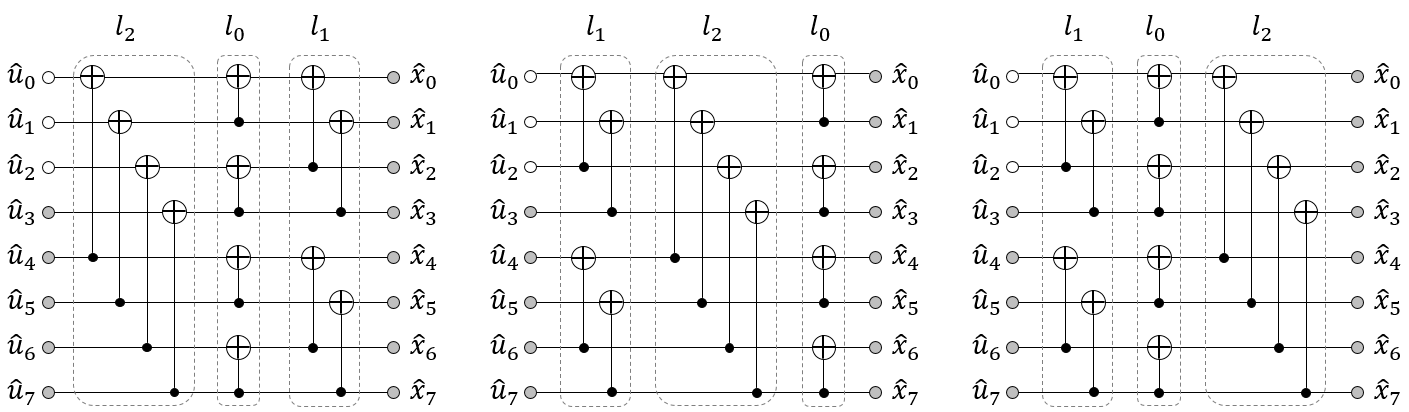}
\caption{Permuted factor graph representations for $\mathcal{P}(8,5)$.}
\label{PerTG}
\end{figure*}

\begin{figure*}[t]
\centering
\includegraphics[width=0.65\linewidth]{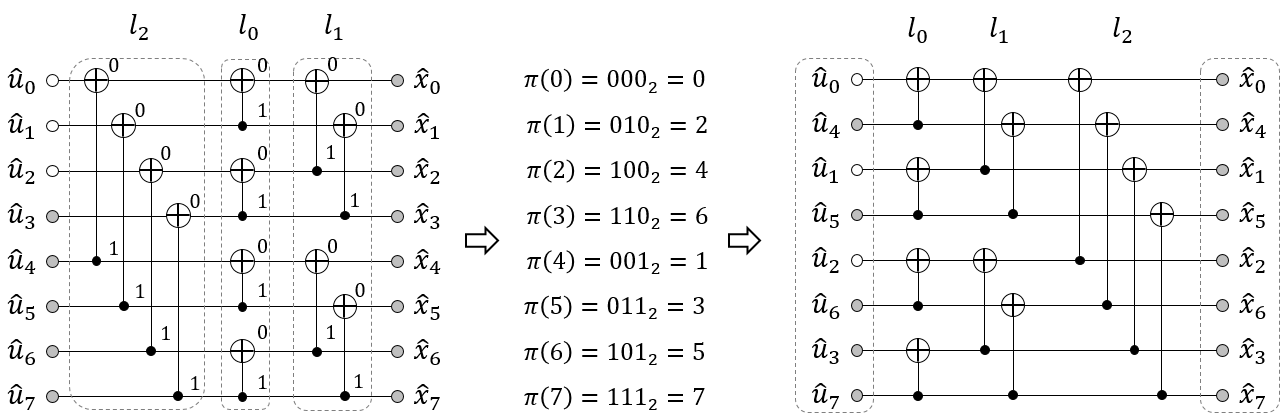}
\caption{The proposed mapping from factor graph permutation to bit-index permutation for $\mathcal{P}(8,5)$.}
\label{BitRelization}
\end{figure*}

Factor graph permutations are a way to provide multiple representations of a single code. It was observed in \cite{Kor09thesis, hussami2009performance} that there exists $n!$ different ways to represent a polar code by permuting the layers in its factor graph. Fig.~\ref{PerTG} illustrates such permutations for $\mathcal{P}(8,5)$, where $3$ out of $3!=6$ permutations are shown. Note that the two leftmost factor graphs in Fig.~\ref{PerTG} are formed by applying cyclic shifts to the original factor graph depicted in Fig.~\ref{figBPDec:a}. In \cite{elkelesh2018belief}, parallel BP decoders are applied on a set of randomly selected factor graphs of a polar code concatenated with a CRC. Although this decoding scheme shows improvement in error probability when compared to a non CRC-aided SCL decoder, permuting layers results in different BP scheduling which consequently requires the design of a different BP decoder for each permutation. Moreover, a large number of random permutations are required to achieve a reasonable error-correction performance, which makes this decoding scheme too complex for practical applications.

\section{Improved BP Decoding on Permuted Factor Graphs} \label{sec:pcbp}

In this section, we first show that there is a one-to-one mapping between the permutations on the factor graph layers and the permutations on the codeword positions. We then propose a method to select the factor graph permutations that improves the error-correction performance of polar codes when decoding is performed on the permuted factor graphs. 

\subsection{From Permutations on Factor Graph Layers to Permutations on Codeword Positions}

We denote by $\{l_{n-1},\ldots,l_0\}$ the layers of the original factor graph of polar codes (the one represented in the right part of Fig. \ref{BitRelization}) and by $\{b^{(i)}_{n-1},\ldots,b^{(i)}_0\}$ the binary expansion of the integer $i$.  

\begin{theorem} \label{th:perm}
Let $\pi : \{0, \ldots, n-1\} \to \{0, \ldots, n-1\}$ be a permutation. Then, the synthetic channel associated to the position with binary expansion $\{b^{(i)}_{n-1},\ldots,b^{(i)}_0\}$ on the factor graph with layers $\{l_{n-1},\ldots,l_{0}\}$ is the same as the synthetic channel associated to the position with binary expansion $\{b^{(i)}_{\pi(n-1)},\ldots,b^{(i)}_{\pi(0)}\}$ on the factor graph with layers $\{l_{\pi(n-1)},\ldots,l_{\pi(0)}\}$.
\end{theorem}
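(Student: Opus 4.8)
The plan is to show that, for any fixed layer ordering, the synthetic channel attached to a leaf is an \emph{ordered} composition of Ar\i kan's two elementary channel transforms, and that the coupled permutation in the statement leaves this ordered composition unchanged term by term. Recall that a single polarization step maps a channel $W$ either to the ``worse'' transform $W^{-}$ (the check-node combination) or to the ``better'' transform $W^{+}$ (the variable-node combination), and that the recursion $W_{N}^{(2i)}=(W_{N/2}^{(i)})^{-}$, $W_{N}^{(2i+1)}=(W_{N/2}^{(i)})^{+}$ expresses every bit-channel as a length-$n$ word in these two operations. My first step is to make the dictionary between the factor graph and this word precise: I would show that each layer contributes exactly one elementary transform, that the \emph{type} of that transform ($-$ versus $+$) at the layer carrying stride $2^{m}$ is dictated by the bit $b^{(i)}_{m}$ of the leaf index, and that the \emph{position} of that transform in the composition is dictated by where the layer sits in the ordering. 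Writing $T_{0}=(\cdot)^{-}$ and $T_{1}=(\cdot)^{+}$, this identifies the synthetic channel on the original ordering $\{l_{n-1},\dots,l_{0}\}$ with (in the orientation adopted here) $T_{b^{(i)}_{0}}\circ T_{b^{(i)}_{1}}\circ\cdots\circ T_{b^{(i)}_{n-1}}(W)$.

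With the dictionary in hand, the comparison is mechanical. On the permuted ordering $\{l_{\pi(n-1)},\dots,l_{\pi(0)}\}$ the layer occupying the $k$-th slot carries stride $2^{\pi(k)}$, so it still reads the index bit $\pi(k)$ and contributes the corresponding $T$; evaluating this composition at the leaf whose binary expansion is $\{b^{(i)}_{\pi(n-1)},\dots,b^{(i)}_{\pi(0)}\}$ makes the transform in each slot depend on the original bits $b^{(i)}_{0},\dots,b^{(i)}_{n-1}$ in such a way that the resulting ordered word of $T$'s reproduces $T_{b^{(i)}_{0}}\circ\cdots\circ T_{b^{(i)}_{n-1}}$ letter for letter. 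Since equal words in the operations $\{(\cdot)^{-},(\cdot)^{+}\}$ produce the same channel, the two synthetic channels coincide, which is the claim.

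The step I expect to be the real obstacle is the dictionary itself, for two reasons. First, the transforms $(\cdot)^{-}$ and $(\cdot)^{+}$ do \emph{not} commute (for a $\mathrm{BEC}(z)$ one has $(W^{-})^{+}\neq(W^{+})^{-}$), so it is genuinely the \emph{ordered} word, not merely the multiset of $+/-$ symbols, that determines the channel; consequently the matching above must be exact, and the reason the coupled permutation succeeds is that permuting the layers and the index bits by the same $\pi$ is the unique reshuffle returning every (slot, controlling-bit) pairing to its original value. Second, establishing that a single layer in an arbitrary ordering realizes one clean elementary transform requires unfolding the recursive combining structure of the graph and tracking how the intermediate node indices are relabeled as layers are moved; the cleanest way to discharge this is induction on $n$, peeling the layer that sits in the outermost slot on both sides, applying the recursion to strip one transform, and invoking the inductive hypothesis on the remaining $n-1$ layers.

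Finally, as a conceptual anchor I would note that the whole statement reflects a symmetry of the butterfly network: relabeling every node index by the bit-permutation $\pi$ carries the $\pi$-permuted factor graph back onto the original one, equivalently the tensor structure $\prod_{m}G_{b_{m}c_{m}}$ of $\bm{G}^{\otimes n}$ is invariant under simultaneously permuting the bit coordinates of its two indices. This viewpoint makes the theorem almost evident and also makes clear that the only freedom still to be fixed is the bookkeeping choice of whether the induced index map is $\pi$ or $\pi^{-1}$, which is settled by the orientation convention for the layer list and does not affect the argument.
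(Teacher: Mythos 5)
Your proposal is correct and follows essentially the same route as the paper's proof: both identify the synthetic channel with the ordered composition $(((W^{(b^{(i)}_{n-1})})^{(b^{(i)}_{n-2})})^{\cdots})^{(b^{(i)}_0)}$ of the ``minus''/``plus'' transforms, observe that each layer contributes one transform whose type is controlled by the corresponding index bit, and conclude that permuting the layers and the bits by the same $\pi$ leaves every (slot, controlling-bit) pairing, hence the ordered word and the resulting channel, unchanged. Your explicit remarks on the non-commutativity of the two transforms and the inductive unfolding of the layer-to-transform dictionary are careful elaborations of steps the paper asserts more briefly, not a different argument.
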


\begin{proof}
Consider the original factor graph of polar codes. Then, the synthetic channel associated to the position with binary expansion $\{b^{(i)}_{n-1},\ldots,b^{(i)}_0\}$ is given by
$$
(((W^{(b^{(i)}_{n-1})})^{(b^{(i)}_{n-2})})^{\cdots})^{(b^{(i)}_0)},
$$
where $W$ is the transmission channel and the transformations $W\to W^{(0)}$ and $W\to W^{(1)}$ are the ``minus'' and ``plus'' polar transforms formally defined in (2.3) and (2.4) of \cite{Kor09thesis}. Note that the $i$-th component of the message word $u_i$ is connected to the $i$-th component of the codeword $x_i$ by associating a ``XOR'' to a $0$ and a ``dot'' to a $1$ in the binary expansion of the integer $i$. Note also that, by permuting the layers of the factor graph, we simply permute the order of those ``XOR''s and ``dot''s operations. Hence, the effect of applying a permutation $\pi$ to the binary expansion of $i$ is the same as the effect of applying the same permutation on the layers of the factor graph. 
\end{proof}

Fig.~\ref{BitRelization} shows an example of the proposed mapping applied to a permuted factor graph of $\mathcal{P}(8,5)$. It can be seen that, by using the permuted bit indices, the structure of the factor graph is unchanged. Therefore, the original decoder can be used to perform decoding on all the required permutations. In addition, the proposed mapping allows to use other decoding algorithms, such as SC and SCL, on the permuted factor graphs without changing the decoder structure. This is particularly useful for hardware implementation.

\subsection{Selection of Good Permutations}
\label{sec:proposed:b}

First, we observe that BP decoding only fails to decode with a small probability on the original factor graph of polar codes. Therefore, we keep the original factor graph in the set of good permutations and create a set of permutations that provides the correct codeword when the decoder fails on the original factor graph. To this end, we first provide an approach to construct the permutation set $\mathbbm{P}$ of a polar code with $n$ factor graph layers which represents a search space for the good factor graph permutations. This approach is summarized in Algorithm~\ref{alg1}, which uses the recursive function \texttt{FormPermutation} described in Algorithm~\ref{alg2}. The main idea is to recursively create $\mathbbm{P}$ by permuting only some of the layers on the right-most side of the graph (see Fig.~\ref{figBPDec:a}), in order to limit the size of the search space. In other words, we fix the layers $\{l_0, \ldots, l_{n-k-1}\}$, with $0 \leq k < n$, and we consider permutations only of the remaining layers $\{l_{n-k},\ldots,l_{n-1}\}$. Note that the size of the search space $\mathbbm{P}$ is $|\mathbbm{P}| = k!$.

\begin{algorithm}[t]
\DontPrintSemicolon
\caption{Forming the permutation set }
\label{alg1}
\SetKwInOut{Input}{Input}
\SetKwInOut{Output}{Output}
\SetKwInOut{Initialization}{Initialization}
\SetKwFunction{FormPermutation}{FormPermutation}
\SetKwFunction{Main}{Main}

\Input{$n,k$} 
\Output{$\mathbbm{P}$} 
\tcp{Fixed layers}
$p_{fix} \leftarrow \{l_0,l_1,\ldots,l_{n-k-1}\}$ \\
\tcp{Permuted layers}
$p_{per} \leftarrow \{l_{n-k},l_{n-k+1},\ldots,l_{n-1}\}$\\ 
\tcc{Recursively form the permutation set}
$\mathbbm{P}_{per} \leftarrow$ \FormPermutation{$p_{per}$}\\
\tcc{Concatenate $p_{fix}$ to each element of $\mathbbm{P}_{per}$}
$\mathbbm{P} \leftarrow \emptyset$\\
\For{$j \leftarrow 0$ \KwTo $|\mathbbm{P}_{per}|-1$}{
	$\mathbbm{P} \leftarrow \mathbbm{P} \cup [p_{fix},\mathbbm{P}_{per}[j]]$
}
\Return{$\mathbbm{P}$}
\end{algorithm}

\begin{algorithm}[t]
\DontPrintSemicolon
\caption{\texttt{FormPermutation($p_{per}$)}}
\label{alg2}
\SetKwFunction{FormPermutation}{FormPermutation}
\SetKwInOut{Input}{Input}
\SetKwInOut{Output}{Output}
\Input{$p_{per}$ \tcc*{Factor graph layer order}}  
\Output{$\mathbbm{P}_{per}$ \tcc*{Set of all permutations}} 
$\mathbbm{P}_{per} \leftarrow \emptyset$ \\
\eIf{$length(p_{per})=2$}{
		$\mathbbm{P}_{per}[0] \leftarrow [p_{per}[0],p_{per}[1]]$\\
		$\mathbbm{P}_{per}[1] \leftarrow [p_{per}[1],p_{per}[0]]$
}{
		\For{$i\leftarrow 0$ \KwTo $length(p_{per})-1$}{
		$p^{'}_{per} \leftarrow p_{per}$ \\
		$l_{i} \leftarrow p^{'}_{per}[i]$  \\
		$p^{'}_{per} \leftarrow p^{'}_{per}\setminus l_{i}$\\
	      $\mathbbm{P}^{'}_{per} \leftarrow$ \FormPermutation{$p^{'}_{per}$} \\		
		\For{$j \leftarrow 0$ \KwTo $|\mathbbm{P}^{'}_{per}|-1$}{
		$\mathbbm{P}_{per} \leftarrow \mathbbm{P}_{per} \cup [l_{i},\mathbbm{P}^{'}_{per}[j]]$\\
		}
	}
}
\Return{$\mathbbm{P}_{per}$}
\end{algorithm}

At this point, we evaluate numerically the probability of correct decoding for each element in $\mathbbm{P}$, when the decoding fails on the original factor graph of polar codes. Then, we select the $M$ elements in $\mathbbm{P}$ with the highest probability of successful decoding and form the subset $\mathbbm{P}_b$ of $\mathbbm{P}$. Note that $|\mathbbm{P}_b| = M$. At the decoder, we consider only the $M$ permutations in $\mathbbm{P}_b$. This is described in Algorithm~\ref{alg3}. Note that Theorem~\ref{th:perm} is used to perform decoding on the permuted bit indices instead of the permuted factor graph layers. In addition, all the BP decoders run a maximum of $I_{\max}$ iterations and an early stopping criteria is used to terminate the BP decoding process if one of the BP decoders satisfies the termination condition. If none of the BP decoders satisfy the early stopping criteria or the maximum number of iterations is reached, the message word given by the original permutation is selected as the final decoding result.

\begin{algorithm}[t]
\DontPrintSemicolon
\caption{BP decoding on predetermined bit-index permutations}
\label{alg3}
\SetKwInOut{Input}{Input}
\SetKwInOut{Output}{Output}
\SetKwFunction{BPDecoding}{BPDecoding}
\Input{$\bm{y},\mathbbm{P}_b$}
\Output{$\bm{\hat{u}}$}
\tcc{Initilize early termination flag}
$isTerm \leftarrow False$\\
\For{all permutations in $\mathbbm{P}_b$}{	
	\tcc{Permute $\bm{y}$ using Theorem~\ref{th:perm}}
	\For{$i \leftarrow 0$ \KwTo $N-1$}{	
	$\bm{y_{\pi}}[\pi(i)] \leftarrow \bm{y}[i]$ \\
	}
	\tcc{Apply BP decoding on $\bm{y_{\pi}}$}
	$[\bm{\hat{u}_{\pi}},isTerm] \leftarrow$ \BPDecoding{$\bm{y_{\pi}}$}\\
	\tcc{Obtain $\hat{u}$ if early termination}
	\If{$(isTerm=True)$}{
		\For{$i \leftarrow 0$ \KwTo $N-1$}{
		$\bm{\hat{u}}[i] \leftarrow \bm{\hat{u}_\pi}[\pi(i)]$\\
		}
		\Return{$\bm{\hat{u}}$}
	}
}
\tcc{If no early termination}
\If{$(isTerm=False)$}{
	$\bm{\hat{u}} \leftarrow$ \BPDecoding{$\bm{y}$}\\
	\Return{$\bm{\hat{u}}$}
}
\end{algorithm}

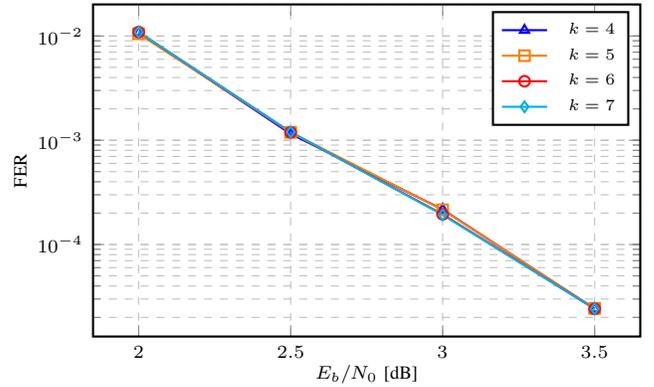
\begin{figure}[t]
\centering
\begin{tikzpicture}[spy using outlines = {rectangle, magnification=2, connect spies}]
  \pgfplotsset{	
    label style = {font=\fontsize{7pt}{7}\selectfont},
    tick label style = {font=\fontsize{7pt}{7}\selectfont}
  }

\begin{axis}[
	scale = 1,
    ymode=log,
    xlabel={$E_b/N_0$ [\text{dB}]}, xlabel style={yshift=0.8em},
    ylabel={FER}, ylabel style={yshift=-0.75em},
	xtick={2,2.5,3,3.5},
    grid=both,
    ymajorgrids=true,
    xmajorgrids=true,
    grid style=dashed,
    width=1\columnwidth, height=6cm,
    thick,
    mark size=2,
    legend style={
      column sep= 2mm,
      font=\fontsize{6pt}{7.2}\selectfont,
    },
]

\addplot[
    color=blue,
    mark=triangle,
    thick,
    mark size=2,
]
table {
2	0.0108	
2.5	0.0011526	
3	0.000216648	
3.5	2.41575e-05	
};
\addlegendentry{$k=4$}

\addplot[
    color=orange,
    mark=square,
    thick,
    mark size=2,
]
table {
2	0.0105	
2.5	0.00119033	
3	0.000215407	
3.5	2.43474e-05	
};
\addlegendentry{$k=5$}

\addplot[
    color=red,
    mark=o,
    thick,
    mark size=2,
]
table {
2	0.0109	
2.5	0.00119033	
3	0.000194821
3.5	2.41575e-05	
};
\addlegendentry{$k=6$}

\addplot[
    color=cyan,
    mark=diamond,
    thick,
    mark size=2,
]
table {
2	0.0109	
2.5	0.00119959	
3	0.000192324	
3.5	2.41439e-05	
};
\addlegendentry{$k=7$}

\end{axis}
\end{tikzpicture} 
\centering
\caption{FER performance of BP decoding on $16$ best permuted factor graphs of $\mathcal{P}(1024,512)$ with a $24$-bit CRC for various sizes of $\mathbbm{P}$, where $|\mathbbm{P}|=k!$.}
\label{perf:k}
\end{figure}

It should be noted that the computational complexity of forming $\mathbbm{P}_b$ is directly proportional to the size of $\mathbbm{P}$, since all the elements in $\mathbbm{P}$ are evaluated numerically. However, this operation is performed only once and is done off-line so there is no complexity overhead in the decoding process. In addition, the error-correction performance of BP decoding on permuted factor graphs selected from $\mathbbm{P}_b$ does not change significantly even for small values of $k$. Fig.~\ref{perf:k} illustrates the FER curves of applying BP decoding on permuted factor graphs in $\mathbbm{P}_b$ to decode $\mathcal{P}(1024,512)$ with a $24$-bit CRC. We set $M=16$ and we consider different values of $k$ ($k \in \{4,5,6,7\}$) corresponding to different sizes of $\mathbbm{P}$ ($|\mathbbm{P}| \in \{24,120,720,5040\}$). It can be seen that the FER performance of BP decoding on permuted factor graphs for all cases is similar, showing that the good permutations are those obtained by permuting only the layers on the right-most side of the polar codes factor graph. Therefore, the size of the search space $\mathbbm{P}$ can be significantly reduced, in order to form $\mathbbm{P}_b$ with low computational complexity.

\section{Experimental Results} \label{sec:results}

In this section, we evaluate applying the proposed technique on the decoding of the polar code $\mathcal{P}(1024,512)$, which is selected for the 5G eMBB control channel \cite{3gpp_report}, in terms of FER performance and average decoding latency. All the BP decoders considered in this section has the same number of $I_{\max} = 200$ maximum iterations.

\begin{figure}[t]
\centering
  \begin{tikzpicture}[spy using outlines = {rectangle, magnification=2, connect spies}]
  \pgfplotsset{	
    label style = {font=\fontsize{7pt}{7}\selectfont},
    tick label style = {font=\fontsize{7pt}{7}\selectfont}
  }

\begin{axis}[
	scale = 1,
    ymode=log,
    xlabel={$E_b/N_0$ [\text{dB}]}, xlabel style={yshift=0.8em},
    ylabel={FER}, ylabel style={yshift=-0.75em},
	xtick={2,2.5,3,3.5},
    grid=both,
    ymajorgrids=true,
    xmajorgrids=true,
    grid style=dashed,
    width=1\columnwidth, height=8cm,
    thick,
    mark size=2,
    legend style={
      anchor={center},
      cells={anchor=west},
      column sep= 2mm,
      font=\fontsize{6pt}{7.2}\selectfont,
    },
    legend to name=perf-legend-crc0,
    legend columns=4,
]

\addplot[
    color=black,    
    mark=square,
    thick,
    mark size=1.5,
]
table {
2	0.0216	
2.5	0.0040093	
3	0.0011595	
3.5	0.000216666	
};
\addlegendentry{PBP-CS \cite{hussami2009performance}}

\addplot[
    color=blue,
    mark=x,
    thick,
    mark size=2,
]
table {
2	0.0221
2.5	0.00405088
3	0.00112514
3.5	1.72E-04
};
\addlegendentry{PBP-R$10$ \cite{elkelesh2018belief}}

\addplot[
    color=orange,
    mark=square,
    thick,
    mark size=2,
]
table {
2	1.36E-02
2.5	2.51E-03
3	0.000710157
3.5	9.26E-05
};
\addlegendentry{PBP-B$10$}

\addplot[
    color=cyan,
    every mark/.append style={solid, fill=cyan},
    mark=diamond,
    thick,
    mark size=2,
]
table {
2	0.0319
2.5	0.00279111
3	3.19E-04
3.5	4.03E-05
};
\addlegendentry{PSC-B$10$}

\addplot[
    color=red,
    mark=*,
    thick,
    mark size=1.5,
]
table {
2	0.0978
2.5	0.0148
3	0.00176947
3.5	2.39E-04
};
\addlegendentry{SC}

\addplot[
    color=olive,
    densely dashed,
    every mark/.append style={solid, fill=olive},
    mark=*,
    thick,
    mark size=1.5,
]
table {
2	0.029
2.5	0.0069
3	0.00161791
3.5	0.000340599
};
\addlegendentry{BP}

\addplot[
    color=black,
    densely dotted,
    every mark/.append style={solid, fill=black},
    mark=*,
    thick,
    mark size=1.5,
]
table {
2	0.0084
2.5	0.00140308
3	0.000270167
3.5	3.35E-05
};
\addlegendentry{SCL$32$}

\end{axis}
\end{tikzpicture} 
  \centering
  \hspace{10pt}\ref{perf-legend-crc0}\vspace{2pt}
  \caption{FER performance of decoding $\mathcal{P}(1024,512)$ on permuted factor graphs considering $M=10$ best permutations, in comparison with considering $10$ random and cyclic shift permutations, when no CRC is used.}
  \label{perf:a}
\end{figure}
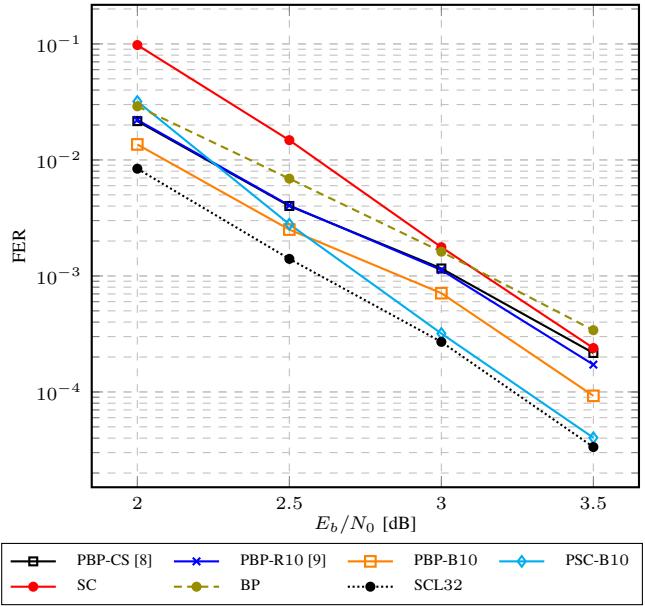

Fig.~\ref{perf:a} shows the FER performance of decoding $\mathcal{P}(1024,512)$ on permuted factor graphs (permuted bit indices) when no CRC is used. Since there are $10$ cyclic shift permutations for $\mathcal{P}(1024,512)$, we used $M = 10$ in order to have a fair comparison with other decoders. In Fig.~\ref{perf:a}, PBP-CS denotes BP decoding on permuted factor graphs with $n$ cyclic shift permutations \cite{Kor09thesis, hussami2009performance}, PBP-R$M$ denotes BP decoding on permuted factor graphs with $M$ random permutations \cite{elkelesh2018belief}, and PBP-B$M$ denotes the proposed BP decoding on permuted factor graphs with $M$ predetermined permutations in $\mathbbm{P}_b$. The results are also compared with SC, BP, and SCL decoding with list size $32$ (SCL$32$). It can be seen that while the FER performance of PBP-CS and PBP-R$10$ are similar, our proposed decoder results in $0.15$~dB improvement in comparison with both PBP-CS and PBP-R$10$ at FER$=10^{-3}$. However, there is still a $0.25$~dB gap between the FER performance of the proposed BP decoding on permuted factor graphs and that of SCL$32$. Using the result of Theorem~\ref{th:perm}, we performed SC decoding on permuted factor graphs and it can be seen in Fig.~\ref{perf:a} that the FER performance gap between using $M=10$ best permutations for SC decoding (denoted by PSC-B$M$) and that of SCL$32$ is less than $0.05$~dB in the high $E_b/N_0$ region. It should be noted that in order to select the correct codeword among the $M$ SC decoders, we used the absolute LLR value of the last decoded bit as a reliability measure of the decoding process, which is also used in \cite{Condo_BD}.

\begin{figure}[t]
  \centering
  \begin{tikzpicture}[spy using outlines = {rectangle, magnification=1.8, width=1cm, height=3.5cm, connect spies}]
  \pgfplotsset{	
    label style = {font=\fontsize{7pt}{7}\selectfont},
    tick label style = {font=\fontsize{7pt}{7}\selectfont}
  }

\begin{axis}[
	scale = 1,
    ymode=log,
    xlabel={$E_b/N_0$ [\text{dB}]}, xlabel style={yshift=0.8em},
    ylabel={FER}, ylabel style={yshift=-0.75em},
	xtick={2,2.5,3,3.5},
    grid=both,
    ymajorgrids=true,
    xmajorgrids=true,
    grid style=dashed,
    width=1\columnwidth, height=9cm,
    thick,
    mark size=2,
    legend style={
      anchor={center},
      cells={anchor=west},
      column sep= 1mm,
      font=\fontsize{6pt}{7.2}\selectfont,
    },
    legend to name=perf-legend-crc24,
    legend columns=3,
]

\addplot[
    color=black,
    every mark/.append style={solid, fill=black},
    mark=*,
    thick,
    mark size=1.5,
]
table {
2	0.0084
2.5	0.00140308
3	0.000270167
3.5	3.35E-05
};
\addlegendentry{SCL$32$}

\addplot[
    color=darkgray,
    densely dashed,
    every mark/.append style={solid, fill=darkgray},
    mark=*,
    thick,
    mark size=1.5,
]
table {
2	0.00074985
2.5	8.05E-06
};
\addlegendentry{SCL$32$-CRC$24$}

\addplot[
    color=black,    
    mark=square,
    thick,
    mark size=1.5,
]
table {
2	0.0391	
2.5	0.00360959	
3	0.000678573	
3.5	7.18741e-05	
};
\addlegendentry{PBP-CS-CRC$24$ \cite{hussami2009performance}}

\addplot[
    color=blue,
    densely dashdotted,
    every mark/.append style={solid, fill=blue},
    mark=x,
    thick,
    mark size=2,
]
table {
2	0.0423
2.5	0.0052
3	0.000794079
3.5	9.62E-05
};
\addlegendentry{PBP-R$10$-CRC$24$ \cite{elkelesh2018belief}}

\addplot[
    color=orange,
    mark=x,
    thick,
    mark size=2,
]
table {
2	0.014	
2.5	0.0017651	
3	0.000288912	
3.5	4.06457e-05	
};
\addlegendentry{PBP-B$10$-CRC$24$}

\addplot[
    color=cyan,
    mark=star,
    thick,
    mark size=2,
]
table {
2	0.1079
2.5	0.0081
3	0.000787538
3.5	0.000100995
};
\addlegendentry{PSC-B$10$-CRC$24$}

\addplot[
    color=blue,
    densely dashdotted,
    every mark/.append style={solid, fill=blue},
    mark=o,
    thick,
    mark size=2,
]
table {
2	0.0333
2.5	0.00291886
3	0.000242761
3.5	2.53E-05
};
\addlegendentry{PBP-R$32$-CRC$24$ \cite{elkelesh2018belief}}

\addplot[
    color=orange,
    mark=o,
    thick,
    mark size=2,
]
table {
2	0.0066
2.5	0.000670988
3	6.50E-05
3.5	1.13E-05
};
\addlegendentry{PBP-B$32$-CRC$24$}

\addplot[
    color=cyan,  
    mark=otimes,
    thick,
    mark size=2,
]
table {
2	0.0817
2.5	0.0054
3	0.000265109
3.5	3.21E-05
};
\addlegendentry{PSC-B$32$-CRC$24$}

\addplot[
    color=blue,
    densely dashdotted,
    every mark/.append style={solid, fill=blue},
    mark=diamond,
    thick,
    mark size=2,
]
table {
2	0.0228
2.5	1.31E-03
3	6.31E-05
3.5	3.67E-06
};
\addlegendentry{PBP-R$128$-CRC$24$ \cite{elkelesh2018belief}}

\addplot[
    color=orange,
    mark=diamond,
    thick,
    mark size=2,
]
table {
2	0.00243831
2.5	0.000180901
3	1.46E-05
3.5	2.08E-06
};
\addlegendentry{PBP-B$128$-CRC$24$}

\addplot[
    color=cyan,   
    mark=triangle,
    thick,
    mark size=2,
]
table {
2	0.0522
2.5	0.00268586
3	0.000109915
3.5	9.01E-06
};
\addlegendentry{PSC-B$128$-CRC$24$}

\end{axis}
\end{tikzpicture} 
  \hspace{10pt}\ref{perf-legend-crc24}\vspace{2pt}
  \caption{FER performance of decoding $\mathcal{P}(1024,512)$ on permuted factor graphs considering $M \in \{10,32,128\}$ best permutations, in comparison with considering the same number of random permutations, when a $24$-bit CRC is used.}
  \label{perf:b}
\end{figure}
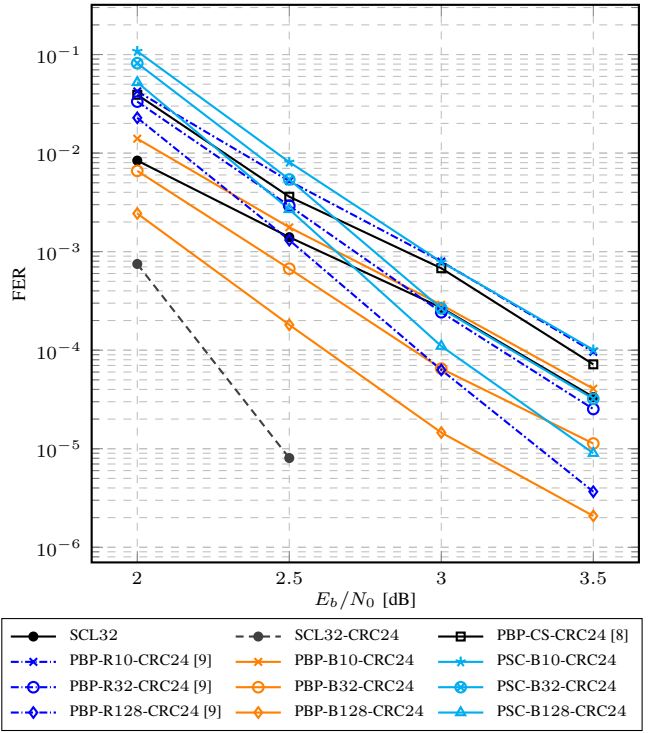

Fig.~\ref{perf:b} presents the FER performance of decoding $\mathcal{P}(1024,512)$ when a CRC of length $24$ is used to aid the decoding process. We use the CRC which is selected to be used in 5G together with polar codes with the CRC polynomial
\begin{align}
\text{CRC}24 = &x^{24} + x^{23} + x^{21} + x^{20} + x^{17} + x^{15} + x^{13} + x^{12} \nonumber \\
               &+ x^{8} + x^{4} + x^{2} + x + 1 \text{.}
\end{align}
The curves in Fig.~\ref{perf:b} are obtained by considering $M \in \{10,32,128\}$ and the suffix -CRC$24$ is used to denote that the decoding is performed with the CRC. It can be seen that when $M=10$, BP decoding with cyclic shift permutations results in a better error-correction performance than when the permutations are selected randomly. Our proposed method results in an additional $0.2$~dB performance improvement over cyclic shift permutations when $M=10$, at FER$=10^{-4}$. When $M=32$ and $M=128$, BP decoding with the proposed selection of permutations results in around $0.3$~dB improvement in comparison with BP decoding on randomly selected factor graphs. We have also performed SC decoding on the permuted factor graphs by using CRC and it can be seen in Fig.~\ref{perf:b} that when adding a CRC, the FER performance of the BP decoder is superior to that of the SC decoder. This is in contrast to the results in Fig.~\ref{perf:a} where no CRC is used.

Although the FER performance of the BP decoder on permuted factor graphs can be improved significantly by using the proposed method to select the best permutations, it is still far from that of SCL$32$-CRC$24$ for practical values of $M$ as depicted in Fig.~\ref{perf:b}. However, the parallel nature of BP decoding makes this decoder interesting for applications with stringent latency requirements. It should be noted that we can evaluate the latency of a decoder by the number of time steps required to finish the decoding process \cite{arikan}. For BP decoding, the average decoding latency can be given as
\begin{equation}
\mathcal{T}_{\text{BP}}=2nI_{\text{avg}}\text{,} \label{T_BP}
\end{equation}
where $I_{\text{avg}}$ is the average number of iterations required to finish the decoding process. We now evaluate the average decoding latency of BP decoding on permuted factor graphs of $\mathcal{P}(1024,512)$ as shown in Fig.~\ref{latency}, when the permutations are selected based on the method in this paper, in comparison with cyclic shift \cite{Kor09thesis, hussami2009performance} and random \cite{elkelesh2018belief} permutations. Note that CRC$24$ is used to early terminate the decoding when one of the decoders passes the CRC test. It can be seen that the average decoding latency of the proposed method $\mathcal{T}_{\text{PBP-B$M$-CRC$24$}}$, is less than $330$ time steps for $2 \leq E_b/N_0 \leq 3.5$, and it is always less than the average decoding latency requirements of \cite{Kor09thesis, hussami2009performance} ($\mathcal{T}_{\text{PBP-CS-CRC$24$}}$) and \cite{elkelesh2018belief} ($\mathcal{T}_{\text{PBP-R$M$-CRC$24$}}$) when $M\in\{10,32\}$. This latency saving is more significant in the low $E_b/N_0$ region. At $E_b/N_0=3$~dB, the average decoding latency of the proposed decoder with $M=32$ and CRC$24$ in accordance with (\ref{T_BP}) is around $100$ time steps. In comparison, the SCL$32$ decoder in \cite{Alexios_LLR_SCLD} requires $2582$ time steps to decode $\mathcal{P}(1024,512)$ with CRC$24$, and the Fast-SSCL-SPC decoder in \cite{Ali_FSSCL} requires $673$ time steps to decode the same code.

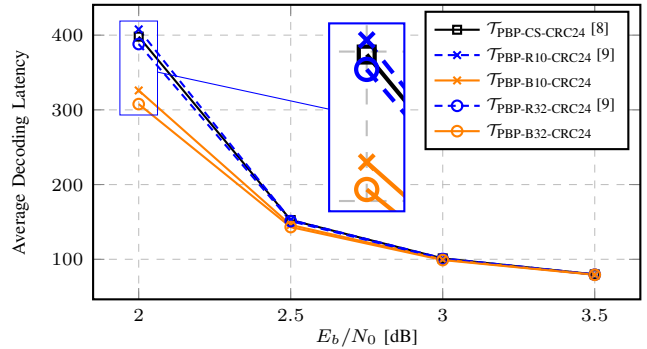
\begin{figure}[t]
  \centering
  \begin{tikzpicture}[spy using outlines = {rectangle, magnification=2, width=1cm, height=2.5cm, connect spies}]
  \pgfplotsset{	
    label style = {font=\fontsize{7pt}{7}\selectfont},
    tick label style = {font=\fontsize{7pt}{7}\selectfont}
  }

\begin{axis}[
    scale = 1,
    xlabel={$E_b/N_0$ [\text{dB}]}, xlabel style={yshift=0.8em},
    ylabel={Average Decoding Latency}, ylabel style={yshift=-0.75em},
    xtick={2,2.5,3,3.5},
    grid=both,
    ymajorgrids=true,
    xmajorgrids=true,
    grid style=dashed,
    legend style={
      cells={anchor=west},
      font=\fontsize{6pt}{7.2}\selectfont,
    },
    width=1\columnwidth, height=5.5cm,
    thick,
    mark size=3,
]

\addplot[
    color=black,    
    mark=square,
    thick,
    mark size=1.5,
]
table {
2	398.022
2.5	152.5268
3	101.1956
3.5	79.7926
};
\addlegendentry{$\mathcal{T}_{\text{PBP-CS-CRC24}}$ \cite{hussami2009performance}}

\addplot[
    color=blue,
    densely dashed,
    every mark/.append style={solid, fill=blue},
    mark=x,
    thick,
    mark size=2,
]
table {
2	407.73
2.5	152.7248
3	101.3612
3.5	79.7926
};
\addlegendentry{$\mathcal{T}_{\text{PBP-R10-CRC24}}$ \cite{elkelesh2018belief}}

\addplot[
    color=orange,
    mark=x,
    thick,
    mark size=2,
]
table {
2	325.97
2.5	145.935
3	99.6972
3.5	79.466
};
\addlegendentry{$\mathcal{T}_{\text{PBP-B10-CRC24}}$}

\addplot[
    color=blue,
    densely dashed,
    every mark/.append style={solid, fill=blue},
    mark=o,
    thick,
    mark size=2,
]
table {
2	388.09
2.5	151.1012
3	100.3984
3.5	79.4162
};
\addlegendentry{$\mathcal{T}_{\text{PBP-R32-CRC24}}$ \cite{elkelesh2018belief}}

\addplot[
    color=orange,
    mark=o,
    thick,
    mark size=2,
]
table {
2	307.706
2.5	143.1236
3	98.9794
3.5	79.242
};
\addlegendentry{$\mathcal{T}_{\text{PBP-B32-CRC24}}$}

\coordinate (spypoint1) at (axis cs:2,356);
\coordinate (magnifyglass1) at (axis cs:2.75,290);  
\end{axis}
\spy [blue] on (spypoint1) in node[fill=white] at (magnifyglass1);

\end{tikzpicture}
  \caption{Comparison between average decoding latency requirements of BP decoding on permuted factor graphs of $\mathcal{P}(1024,512)$ when the permutations are selected based on the proposed method in this paper and the methods in \cite{Kor09thesis, hussami2009performance} and \cite{elkelesh2018belief}. A $24$-bit CRC is used for early termination.}
  \label{latency}
\end{figure}

\section{Conclusion} \label{sec:conclude}
In this paper, we first introduced a mapping to represent a polar code factor graph permutation with an equivalent bit-index permutation. This scheme enables a single decoding scheduling that can be applied on different factor graph permutations, which is particularly interesting for hardware implementations. Second, we provided an empirical method to construct the good permutations of polar codes, in order to use only a small number of predetermined permutations to obtain a reasonable error-correction performance. We demonstrated that the error-correction performance of the proposed method applied to belief propagation (BP) decoding can obtain more than $0.25$~dB improvement at the frame error rate of $10^{-4}$, in comparison with BP decoding applied to the same number of randomly-selected permuted factor graphs for the 5G polar code of length $1024$ and rate $0.5$, concatenated with a $24$-bit cyclic redundancy check.



\begin{thebibliography}{10}
\providecommand{\url}[1]{#1}
\csname url@samestyle\endcsname
\providecommand{\newblock}{\relax}
\providecommand{\bibinfo}[2]{#2}
\providecommand{\BIBentrySTDinterwordspacing}{\spaceskip=0pt\relax}
\providecommand{\BIBentryALTinterwordstretchfactor}{4}
\providecommand{\BIBentryALTinterwordspacing}{\spaceskip=\fontdimen2\font plus
\BIBentryALTinterwordstretchfactor\fontdimen3\font minus
  \fontdimen4\font\relax}
\providecommand{\BIBforeignlanguage}[2]{{%
\expandafter\ifx\csname l@#1\endcsname\relax
\typeout{** WARNING: IEEEtran.bst: No hyphenation pattern has been}%
\typeout{** loaded for the language `#1'. Using the pattern for}%
\typeout{** the default language instead.}%
\else
\language=\csname l@#1\endcsname
\fi
#2}}
\providecommand{\BIBdecl}{\relax}
\BIBdecl

\bibitem{arikan}
E.~Ar{\i}kan, ``Channel polarization: A method for constructing
  capacity-achieving codes for symmetric binary-input memoryless channels,''
  \emph{{IEEE} Trans. Inf. Theory}, vol.~55, no.~7, pp. 3051--3073, July 2009.

\bibitem{tal_list}
I.~Tal and A.~Vardy, ``List decoding of polar codes,'' \emph{{IEEE} Trans. Inf.
  Theory}, vol.~61, no.~5, pp. 2213--2226, March 2015.

\bibitem{Alexios_LLR_SCLD}
A.~Balatsoukas-Stimming, M.~B. Parizi, and A.~Burg, ``{LLR}-based successive
  cancellation list decoding of polar codes,'' \emph{{IEEE} Trans. Signal
  Process.}, vol.~63, no.~19, pp. 5165--5179, Oct. 2015.

\bibitem{PSCL_GLOBECOM}
S.~A. Hashemi, M.~Mondelli, S.~H. Hassani, R.~Urbanke, and W.~J. Gross,
  ``Partitioned list decoding of polar codes: Analysis and improvement of
  finite length performance,'' in \emph{Proc. of the IEEE Global Commun. Conf.
  (GLOBECOM)}, Dec 2017, pp. 1--7.

\bibitem{Ali_mem_effic_PC}
S.~A. Hashemi, C.~Condo, F.~Ercan, and W.~J. Gross, ``Memory-efficient polar
  decoders,'' \emph{{IEEE} J. Emerg. Sel. Topics Circuits Syst.}, vol.~7,
  no.~4, pp. 604--615, Dec. 2017.

\bibitem{Ali_FSSCL}
S.~A. Hashemi, C.~Condo, and W.~J. Gross, ``Fast and flexible
  successive-cancellation list decoders for polar codes,'' \emph{{IEEE} Trans.
  Signal Process.}, vol.~65, no.~21, pp. 5756--5769, Nov. 2017.

\bibitem{Kor09thesis}
S.~B. Korada, ``Polar codes for channel and source coding,'' Ph.D.
  dissertation, EPFL, Lausanne, Switzerland, 2009.

\bibitem{hussami2009performance}
N.~Hussami, S.~B. Korada, and R.~Urbanke, ``Performance of polar codes for
  channel and source coding,'' in \emph{IEEE Int. Symp. on Inf. Theory}, 2009,
  pp. 1488--1492.

\bibitem{elkelesh2018belief}
A.~Elkelesh, M.~Ebada, S.~Cammerer, and S.~ten Brink, ``Belief propagation
  decoding of polar codes on permuted factor graphs,'' \emph{arXiv preprint
  arXiv:1801.04299}, 2018.

\bibitem{arikan2010polar}
E.~Ar{\i}kan, ``Polar codes: A pipelined implementation,'' in \emph{Proc. 4th
  Int. Symp. on Broad. Commun.}, 2010, pp. 11--14.

\bibitem{yuan2014early}
B.~Yuan and K.~K. Parhi, ``Early stopping criteria for energy-efficient
  low-latency belief-propagation polar code decoders,'' \emph{IEEE Transactions
  on Signal Processing}, vol.~62, no.~24, pp. 6496--6506, Dec. 2014.

\bibitem{3gpp_report}
\BIBentryALTinterwordspacing
3GPP, ``Multiplexing and channel coding ({R}elease 15) {3GPP} {TS} {38.212}
  {v15.0.0.}'' \emph{Valbonne, France}, Dec. 2017. [Online]. Available:
  \url{http://www.3gpp.org/ftp/Specs/archive/38\_series/38.212/38212-121.zip}
\BIBentrySTDinterwordspacing

\bibitem{Condo_BD}
C.~Condo, S.~A. Hashemi, and W.~J. Gross, ``Blind detection with polar codes,''
  \emph{IEEE Commun. Letters}, vol.~21, no.~12, pp. 2550--2553, Dec 2017.

\end{thebibliography}
\end{document}